\newcommand{\ra}[1]{\renewcommand{\arraystretch}{#1}}
\newtheorem{theorem}{Theorem}[section]
\newtheorem{definition}{Definition}[section]
\def\BibTeX{{\rm B\kern-.05em{\sc i\kern-.025em b}\kern-.08em
    T\kern-.1667em\lower.7ex\hbox{E}\kern-.125emX}}
\begin{document}

\title{Secure Outsourced Decryption for FHE-based Privacy-preserving Cloud Computing\\
}

    \author{
    \IEEEauthorblockN{Xirong Ma$^1$, Chuan Li$^1$, Yuchang Hu$^4$, Yunting Tao$^{1,3}$, Yali Jiang$^1$,\\ Yanbin Li$^1$, Fanyu Kong$^{1,2,}$\IEEEauthorrefmark{1}\thanks{\IEEEauthorrefmark{1} Corresponding author: Fanyu Kong, Email: fanyukong@sdu.edu.cn}, Chunpeng Ge$^{1,5}$} \\
    
    \IEEEauthorblockA{$ ^1$ School of Software, Shandong University, Jinan, 250101, Shandong, China}
    \IEEEauthorblockA{$ ^2$ Shandong Sansec Information Technology Co., Ltd, Jinan, 250101, Shandong, China}
    \IEEEauthorblockA{$ ^3$ College of Information Engineering, Binzhou Polytechnic, Binzhou, 256603, Shandong, China}
    \IEEEauthorblockA{$ ^4$ College of Artificial Intelligence, Nanjing Agricultural University, Nanjing, 210095, Jiangsu, China}
    \IEEEauthorblockA{$ ^5$ Joint SDU-NTU Centre for Artificial Intelligence Research (C-FAIR), Jinan, 250101, Shandong, China}
    }



\maketitle

\begin{abstract}

The demand for processing vast volumes of data has surged dramatically due to the advancement of machine learning technology. Large-scale data processing necessitates substantial computational resources, prompting individuals and enterprises to turn to cloud services. Accompanying this trend is a growing concern regarding data leakage and misuse. Homomorphic encryption (HE) is one solution for safeguarding data privacy, enabling encrypted data to be processed securely in the cloud. However, the encryption and decryption routines of some HE schemes require considerable computational resources, presenting non-trivial work for clients. In this paper, we propose an outsourced decryption protocol for the prevailing RLWE-based fully homomorphic encryption schemes. The protocol splits the original decryption into two routines, with the computationally intensive part executed remotely by the cloud. Its security relies on an invariant of the NTRU-search problem with a newly designed blinding key distribution. Cryptographic analyses are conducted to configure protocol parameters across varying security levels. Our experiments demonstrate that the proposed protocol achieves up to a $67\%$ acceleration in the client's local decryption, accompanied by a $50\%$ reduction in space usage.

\end{abstract}

\begin{IEEEkeywords}
Privacy-preserving computation, Outsourced computing, Homomorphic encryption 
\end{IEEEkeywords}

\section{Introduction} \label{sec: Intro}
In recent years, machine learning has witnessed profound advancements across various domains, with the models growing in size and complexity. Individuals and organizations gradually turn to cloud services for more efficient and convenient model training and data storage.
However, the untrusted identity of the cloud raises concerns among clients regarding their data privacy. Users are apprehensive that their uploaded data might be leaked or misused during its processing on the cloud. 

Homomorphic encryption is a promising technique capable of addressing these concerns and safeguarding data privacy. Fully homomorphic encryption (FHE), in particular, supports arbitrary computation on ciphertexts. Its development has seen rapid strides over the past 15 years\cite{gentry2009fully}. Presently, the prevailing FHE schemes \cite{cheon2017homomorphic,chillotti2020tfhe,brakerski2014leveled,fan2012somewhat} can facilitate secure computation of complex machine learning operations, such as neural network inference and training \cite{sav2020poseidon,froelicher2020scalable,lou2020glyph,lee2022privacy}. These schemes are built based on the RLWE (Ring-Learning with Error) security assumption \cite{lyubashevsky2010ideal,lyubashevsky2013toolkit}, which relates to difficult lattice-based problems and may provide potential resilience against the attacks of quantum algorithms.


A classic example of secure cloud computing utilizing homomorphic encryption is illustrated in Figure \ref{fig: scenario1}. In this scenario, the client transmits its homomorphically encrypted data to the cloud. The cloud then processes the encrypted data according to the client's instruction $f$ and returns the encrypted result to the client, which can be decrypted using the secret key $sk_A$. Despite the delegation of task $f$ to the cloud server, clients have to frequently perform encryption and decryption operations to upload data and recover results. For RLWE-based HE schemes mentioned above, these routines involve polynomial multiplications with large degrees and modulus, posing a nontrivial challenge for resource-constrained devices. 


\begin{figure}
\centerline{\includegraphics[scale=0.42]{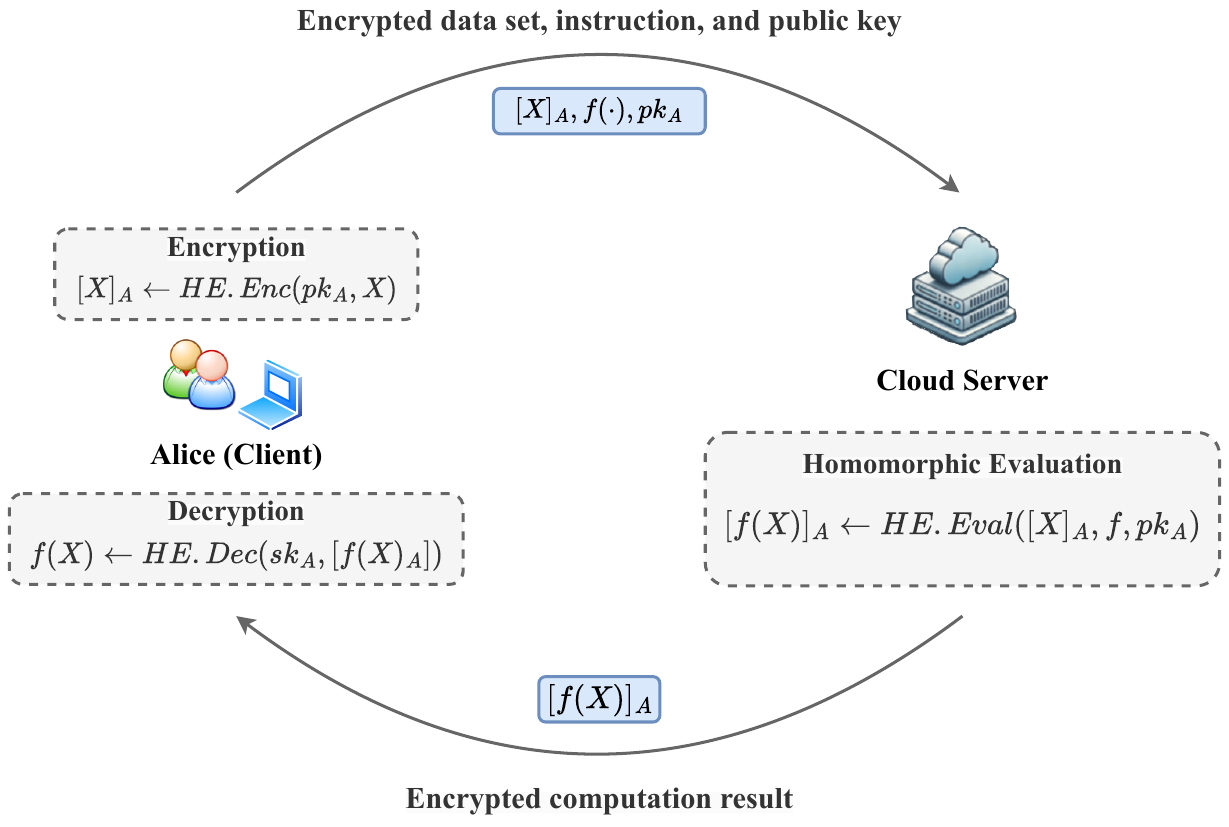}} 
\caption{Secure cloud computing scenario using homomorphic encryption}
\label{fig: scenario1}
\end{figure}


To ease clients' computational burden, we propose a new protocol for outsourcing the decryption routine. It divides the original decryption algorithm into two stages that can be performed collaboratively by both the cloud server and the client. The stage executed by the cloud is referred to as 'blind decryption', while the stage executed by the client is called 'local decryption'. The latter is more efficient and requires less memory than the original decryption process. Our contributions are outlined as follows:

\begin{itemize}[leftmargin = 1em]
    \item We devise a multiplication-communicative secret key blinding technique as the core of the outsourced decryption protocol. This allows us to securely delegate part of the decryption computation to the cloud using a blinded secret key. The proposed blinding method utilizes the inverses of sparse polynomials as blinding factors, enabling unblinding to be accomplished more swiftly than executing dense polynomial multiplication, which constitutes the primary complexity of the decryption algorithm.

    \item We analyze the security of the outsourced decryption protocol. The proposed protocol is proven to be semantically secure under RLWE and NTRU assumptions, wherein the blinded secret key can be interpreted as an instance of the NTRU-search problem. We further perform various attacks to parameterize the security of our protocol, given that the blinding factors are drawn from a new distribution distinct from conventional NTRU settings\cite{hoffstein2017choosing,hoffstein1998ntru}.
    
    \item We implement the protocol in the renowned homomorphic encryption library Lattigo\cite{lattigo} and compare its performance with the original decryption algorithm, which shows the efficiency of our protocol. 

\end{itemize}

\subsection{Roadmap}
We recall the related work in Section \ref{sec: rlwork}. Section \ref{sec: archi} gives an overview of the outsourcing decryption protocol. Section \ref{sec: OtsDec-D} presents the detailed design of the protocol. We discuss the security in Section \ref{sec: secanl}. Section \ref{sec:exp} presents experiment results. Finally, we summarize this research in Section \ref{sec: conclusion}.



\section{Background and Related Work}\label{sec: rlwork}\label{sec: prelimary}\label{sec: GLWE}
\subsection{Homomorphic Encryption}

One of the most prevailing and versatile types of (fully) homomorphic encryption \cite{cheon2017homomorphic,fan2012somewhat,brakerski2014leveled,chillotti2020tfhe} is built based on the RLWE security assumption \cite{regev2009lattices,lyubashevsky2010ideal,lyubashevsky2013toolkit}. It includes HE schemes that support homomorphic operations in various plaintext spaces, including complex vector space\cite{cheon2017homomorphic,cheon2019full}, modulo integer vector space\cite{gentry2012homomorphic,brakerski2014leveled,fan2012somewhat}, $\{0, 1\}$ binary space\cite{ducas2015fhew,chillotti2020tfhe}, and others. The performance of these RLWE-based HE schemes is determined by the efficiency of polynomial ring element operations, among which polynomial multiplication may be the most important factor. The Number Theoretic Transformation(NTT) is now widely deployed in HE schemes for polynomial multiplication acceleration\cite{cheon2019full,halevi2019improved}, boasting an asymptotic complexity of $O(d\log{d})$, where $d$ signifies the polynomial degree. Nonetheless, polynomial multiplication remains a non-trivial computation in HE schemes since $d$ may range from $2^{10}$ to $2^{16}$ along with the length of the coefficient modulus spanning from tens to thousands \cite{bossuat2024security}. Typically, encryptions and decryptions in RLWE-based HE schemes require polynomial multiplications, so it is worth thinking about whether there is any way to reduce their computational burden for clients in some cloud computing scenarios. Our work may provide some interesting insight on this topic.

In the following, we briefly introduce the RLWE-based public key encryption (PKE) \cite{regev2009lattices,brakerski2012fully,fan2012somewhat,cheon2017homomorphic}, which is the foundation of RLWE-based HE schemes, where routines supporting homomorphic computation are straightly built upon this PKE system (but we do not go this far since we only focus on decryption).

Given a set of pre-determined parameters: $(d,q,\chi_{sec},\chi_{err})$, the RLWE-based HE cryptosystem is defined on a polynomial ring $R_q = Z_q[X]/(X^d+1)$. To encrypt a plaintext $m \in R_q$, the encryption algorithm goes like this:
\begin{equation}
    (u,v) \in R_q^2 \leftarrow (ar+e_2,(as+e_1)r + \Delta_1 m + \Delta_2 e_3).
\end{equation}
Here,$s$ is the secret key sampled from distribution $\chi_{sec}$ over $R_q$, and $(a,b) = (a, as+e) \in R_q^2$ is the public key. $a$ is uniformly sampled from $R_q$, and $e_1,e_2,e_3\in \chi_{err}$ are some noisy polynomials inserted to ensure security. $\Delta_1$ and $\Delta_2$ are some scaling factors to control the precision of the plaintext and the noise growth in the ciphertext. The core procedure in decryption is an inner product over $R_q^2$:
\begin{equation}\label{eq: dec}
    \Delta_1 m + \Delta_2 e \leftarrow \left< (u,v), (s,1) \right> \in R_q 
\end{equation}
By properly setting $\Delta_1$ and $\Delta_2$, one may correctly recover $m$ from the decryption result using simple scalar multiplication and rounding techniques. For example, the renowned BFV (Brakerski\/ Fan and Vercauteren) scheme\cite{brakerski2012fully,fan2012somewhat} sets $\Delta_1 m= \lfloor \frac{q}{t} m \rfloor$ and $\Delta_2 = 1$, and its entire decryption can be written as $m \leftarrow \lceil \frac{t}{q} \left< (u,v), (s,1) \right> \rfloor \mod{t}$, where $m$ is in the plaintext space $R_t$.



\subsection{Sparse Polynomial Operation}

Sparse polynomials possess a distinct representation from dense polynomials. The former is represented by a series of pairs comprising non-zero coefficients and their corresponding degrees, while the latter utilizes fixed-length arrays. In practice, the disparity in the volume of stored information ensures that sparse polynomials inherently have a more compact representation and, to some extent, greater computational efficiency. A recent survey\cite{Roche2018what} reports that existing algorithms\cite{van2012complexity,roche2008adaptive,roche2011chunky,van2013bit,arnold2013faster,Arnold2016SparsePI} can achieve a complexity of $O(k\log{d})$ for sparse polynomial multiplication under certain conditions, where $k$ is the maximum number of non-zero terms in the input and output polynomials and $d$ the upper bound of the degree. 

However such complexity may increase to $O(d\log{d})$ when the number of non-zero terms in the two input polynomials diverges significantly, i.e., when one input has $k_1 < \log{d}$ terms and the other has $k_2=d$ terms. We term this as the "sparse-to-dense" multiplication. In this scenario, the most straightforward implementation achieves a lower complexity of $O(k_1d)$. 




Moreover, sparse polynomials are also of interest in cryptographic scheme design. Fixed-sparsity binary (or ternary) polynomials were initially used as keys in the NTRU cryptosystem\cite{hoffstein1998ntru}, where their Hamming weight was employed to gauge the success probability of decryption and the security of the scheme. In RLWE-based homomorphic encryption schemes, coefficients of private key polynomials are typically sampled from ternary distributions or discrete Gaussian distributions. Distributions sparser than the ternary distribution can support a larger homomorphic computation budget, yet their security implications have not been thoroughly explored\cite{bossuat2024security,albrecht2021standard}.

\subsection{Secure Outsourcing of Cryptographic Computations}
Certain cryptographic algorithms exhibit significant complexity, rendering lightweight devices unsuitable for executing such resource-intensive algorithms. One solution is to securely outsource these algorithms to high-performance machines. Hohenberger \textit{et al.}\cite{Hohenberger} proposed a methodology and framework for this approach. Subsequently, much research has provided various solutions for outsourcing cryptography-related computations, such as modular exponentiation \cite{ma2013ModExp,chen2013ModExp,di2017ModExp,Su2017InvMod,di2019ModExp,ye2016ModExp,hu2020ModExp,Su2020ModExp,rath2024ModExp}, modular inversion \cite{Su2017InvMod,tian2019ModInv,Bouillaguet2023ModInv}, bilinear pairings\cite{CHEN2015bilinear,Tian2015bilinear,zhang2021bilinear,ying2024bilinear}, and scalar multiplication on elliptic curves\cite{zhou2016elliptic,Yuan2020elliptic}. 


Nevertheless, to our knowledge, there has been a relatively limited focus on outsourcing RLWE-based cryptographic computations involving large-scale polynomial multiplication. From the perspective of polynomial operations, the most relevant research we can find is the outsourcing scheme of the extended Euclidean algorithm proposed by Zhou \textit{et al.}\cite{ZHOU2020XGCD}, which can be seen as an extension of outsourcing the modular inversion. However, the computations they outsourced are not at the same granularity as those discussed in this paper, where they measure complexity in terms of polynomial multiplication, while we consider the complexity of multiplication itself. Moreover, their method to blind the input polynomials differs from ours, as they employ a variable substitution method for blinding.

\section{Protocol Overview \& Security Model} \label{sec: archi}

The proposed outsourced decryption protocol consists of a tuple of routines, they are $\Pi_{\text{Setup}}$, $\Pi_{\text{SkBdKeyGen}}$, $\Pi_{\text{SkBd}}$, $\Pi_{\text{BldDec}}$, and $\Pi_{\text{LocDec}}$:\\
\begin{itemize}[leftmargin = 1em]
    \item $\Pi_{\text{Setup}}$\textbf{(Setup)}: Takes parameters $(d,q,\chi_{sk},\lambda)$ from the underlying HE scheme as input and outputs a hamming weight $h$, where $d$ and $q$ denote the polynomial ring $R_q = \mathbb{Z}[X]_q/(X^d+1)$, $\chi_{sk}$ denotes the secret key distribution, and $\lambda$ denotes the security parameter.
    
    
    \item $\Pi_{\text{SkBdKeyGen}}$\textbf{(Blinding key generation for secret key)}: Takes a hamming weight $h$ as input and outputs a pair of factors $t^{-1},t \in R_q$ as the key pair for blinding and unblinding the secret key.
    
    \item $\Pi_{\text{SkBd}}$\textbf{(Secret Key Blinding)}: Takes a secret key $sk$ from the underlying HE scheme and the blinding key pair $(t^{-1},t)$ as input, and outputs the blinded secret key $\tilde{s}$.
    
    \item $\Pi_{\text{BldDec}}$\textbf{(Blind Decryption)}: Takes a ciphertext $c = (u,v) \in R_q^{2}$ and the blinded secret key $\tilde{s}$ as input, and outputs the blind decryption of $c$.
    
    \item $\Pi_{\text{LocDec}}$\textbf{(Local Decryption)}: Takes a blindly decrypted ciphertext $\tilde{c}$, perform unblinding on it and outputs a plaintext $m$.\\
\end{itemize}

\begin{figure}
    \centering
    \tcbset{width=\columnwidth, colframe=black,colback=white,arc=0mm}

\begin{tcolorbox}
\centerline{\textbf{Setup Phase}}

\textbf{Input}:\quad The client $\mathcal{T}$ inputs the ring $R_q = Z_q[X]/(X^d+1)$, the secret key $s\leftarrow \chi_{sk}$, and the security parameter $\lambda$ inherited from the RLWE-based HE scheme established by itself; The cloud $U_{H}$ inputs $\perp$. 

\textbf{Output}:\quad $\mathcal{T}$ receives a pair of blinding keys for $s$: $(t,t^{-1})$; $\mathcal{U}_H$ receives a blinded version of $s$: $\tilde{s}$.\\

\begin{enumerate}[leftmargin = 1em]
    \item \textbf{[Protocol setup]}: $\mathcal{T}$ performs $\prod_{\text{Setup}}(d,q,\lambda,\chi_{sk})$ and outputs parameter $h$.
    \item \textbf{[Generate blinding key]}: $\mathcal{T}$ performs $\prod_{\text{SkBdKeyGen}}(d,q,h)$ and outputs $t^{-1},t \in R_q$. 
    \item \textbf{[Blinding]}: $\mathcal{T}$ performs $\prod_{\text{SkBd}}(s,t^{-1})$, outputs a blinded secret key $\tilde{s}$, and sends it to $\mathcal{U}_{H}$.\\
\end{enumerate}

\centerline{\textbf{Outsourced Decryption Phase}}

\textbf{Input}:\quad $\mathcal{T}$ inputs $\perp$; $\mathcal{U}_H$ inputs a set of ciphertexts $\{ ct_i \}$ encrypted under $s$, which is typically the result of some homomorphic evaluation and to be sent to $\mathcal{T}$ for decryption.

\textbf{Output}:\quad $\mathcal{T}$ receives the decryption result of $\{ct_i\}$; $\mathcal{U}_H$ receives $\perp$.\\

\begin{enumerate}[leftmargin = 1em]
    \item \textbf{[Blind Decryption]}: $\mathcal{U}_H$ initializes an empty set $P$. For each ciphertext $ct=(u,v)\in R_q^2$ in $\{ct_i \}$, $\mathcal{U}_H$ performs $\prod_{\text{BldDec}}(\tilde{s},ct)$ and adds the output into $P$. $\mathcal{U}_H$ then sents $P$ to $\mathcal{T}$.
    \item \textbf{[Local Decryption]}: On receiving blindly decrypted results $P$, for each $\tilde{pt}\in P$, $\mathcal{T}$ performs $\prod_{\text{LocDec}}(t,\tilde{pt})$ and accepts the output as the decryption result.\\
\end{enumerate}
\end{tcolorbox}
    \caption{Full description of the outsourced decryption protocol}
    \label{fig: scenario2}
\end{figure}

There are two entities involved in the outsourced decryption protocol:
\begin{itemize}[leftmargin = 1em]
    \item \textbf{Terminal/Client} $\mathcal{T}$: A resource-constraint terminal incapable of performing computationally heavy operations. $\mathcal{T}$ owns a pair of HE keys $(pk,sk)$ and seeks help for intensive computation from an external server.
    \item \textbf{Cloud server for homomorphic evaluation} $\mathcal{U}_{H}$: Server $\mathcal{U}_{H}$ handles computationally heavy homomorphic computation. Usually, this server may hold the public key $pk$ sent by $\mathcal{T}$ along with some encrypted data of $\mathcal{T}$, receive computation instruction from $\mathcal{T}$, and return the corresponding result to $\mathcal{T}$ for decryption.
\end{itemize}

As shown in Figure \ref{fig: scenario2}, the proposed protocol has two \textit{phases}, Setup and Outsourced decryption. During the Setup phase, the client $\mathcal{T}$ blinds his or her secret key and sends it to the cloud server. Subsequently, whenever $\mathcal{U}_{H}$ finishes a computation task ordered by $\mathcal{T}$, it performs blind decryption on the encrypted results and sends it back to $\mathcal{T}$ for local decryption. 

Similar to many privacy-preserving computation schemes using HE\cite{jiang2018secure,sav2020poseidon,MA2024103658}, we employ the semi-honest threat model for the proposed protocol, wherein the server $\mathcal{U}_{H}$ is assumed to be a passive adversary that adheres to the protocol but still maintains curiosity about $\mathcal{T}$'s private data. 

Note that verification of the outsourced decryption is required if we are to consider scenarios with higher security demands. However, we will not delve into this in the current paper, as verifying the correctness of the homomorphic computation itself is a non-trivial task and remains a significant research challenge today\cite{fiore2020boosting,bois2021flexible,ganesh2023rinocchio,chatel2024verifiable}. Nevertheless, we will discuss feasible approaches to verify the correctness of the proposed protocol at the end of the paper for future research reference.

For our outsourcing decryption protocol to be applicable to the aforementioned scenario, it must adhere to the following properties\cite{Hohenberger}: \\

\noindent \textbf{Semantic Security} We require that for any adversary $\mathcal{U}_{H}'$ having access to the public key $pk$ and the blinded secret key $\tilde{s}$, for any two messages $m_0,m_1 \in \mathcal{M}$, the advantage of $\mathcal{U}_{H}'$ in distinguishing between distributions $\text{HE.Enc}(pk,m_0)$ and $\text{HE.Enc}(pk,m_1)$ should be smaller than $2^{-\lambda}$. $\mathcal{M}$ denotes the plaintext space of the underlying HE scheme, and $\lambda$ is the security parameter.\\

\noindent \textbf{Correctness} For all arithmetic functions $f: \mathcal{M}^I \rightarrow \mathcal{M}$ defined in the underlying HE scheme, there exists an instance of the outsourcing decryption protocol such that $\Pi_{\text{LocDec}} (\Pi_{\text{BldDec}}( \text{HE.Eval}(f,pk,ct_1,\dots,ct_I), \tilde{sk} ), sk_{ob}) = f(m_1, \dots, m_I)$ holds with overwhelming probability.\\

\noindent \textbf{$\alpha$-Efficient Decryption} 
Considering $\mathcal{T}$ and $\mathcal{U}_H$ as algorithms composed of the routines they execute, the runtime of $\mathcal{T}(x)$ is no greater than an $\alpha$-multiplicative factor of the runtime of $\text{HE.Dec}(x)$ for any tuple of input $x$.\\ 

\section{Detailed Design of the Outsourced Decryption}\label{sec: OtsDec-D}

In this section, we introduce the detailed design of the routines in the outsourcing decryption protocol.

\subsection{Secret Key Blinding from Sparse Polynomial Inversion}

The complexity of the decryption algorithm (as described in Equation \ref{eq: dec}) is dominated by polynomial multiplication involving secret key and ciphertext components. Therefore, to securely outsource this computation to cloud computing, our idea is to devise a method for secret key blinding that maintains the commutative property of multiplication. This ensures that after the cloud computes the target multiplication using the blinded key (in place of the original key), the client can directly remove the blinding factor from the multiplication result to obtain the desired outcome. Specifically, for a given secret key $s$ we sample a pair of factor $(t,t^{-1})$ in $R_q$ and use them to perform blinding operation: $\tilde{s}\leftarrow st^{-1}$. It is evident that $\tilde{s} u t = su$ holds for any legal ciphertext $(u,v)\in R_q^2$. 

From the description above, one can observe that routines $\Pi_{\text{SkBd}}$ and $\Pi_{\text{BldDec}}$ are just normal polynomial multiplications (see Algorithms \ref{alg: seckeyob} and \ref{alg: pardec}). Note that the former one is invoked only once in the protocol, whose computational complexity can be amortized across multiple subsequent ciphertext decryption operations, making it negligible in the overall computational load. The latter is typically performed by $\mathcal{U}_{H}$ for each ciphertext to be decrypted.

The remaining challenge is to generate blinding factors that ensure $\Pi_{\text{LocDec}}$ operates with maximal efficiency. Note that the sparse-to-dense polynomial multiplication exhibits lower complexity compared to NTT-based dense multiplication when the sparsity and the circuit design are appropriately handled. Therefore, we choose sparse polynomials as the unblinding factors in $\Pi_{\text{SkBdKeyGen}}$. The sampling procedure is outlined in Algorithm \ref{alg: sample}, where we use hamming weight (denoted as $h$) to control the sparsity of a polynomial, and $S_h$ the set of all sparse polynomials over $R_q$ with hamming weight $h$. The unblinding factor is sampled from $S_h$ uniformly and randomly, with its inverse being the blinding factor. Note that an invertibility test should be done for the candidate unblinding factor since $R_q$ is merely a ring rather than a field.

$\Pi_{\text{SkBdKeyGen}}$ utilizes the Chinese Remain Theorem (CRT) to sample unblinding factors from $S_h$ conveniently. The following equation holds when $q$ can be decomposed into $L$ coprime factors $q = \prod_{i=0}^{L-1} q_i$:
\begin{equation}
    R_q \equiv R_{q_0} \times R_{q_1} \times \dots \times R_{q_{L-1}}.
\end{equation}
Elements individually satisfying the invertibility condition from different $R_{q_i}$ can be combined into an invertible element in $R_q$ using the CRT map (see Theorem \ref{thm: CRT1}). It can also be proven that elements sampled this way are uniformly distributed across $S_h$ (see Theorem \ref{thm: CRT2}). Using CRT can significantly reduce the complexity of the invertibility check since $q$ may be very large as we mentioned previously. Note that in homomorphic encryption schemes, the structure $q=\prod_{i=0}^L q_i$ is typically used to construct a large $q$, where $q_i$ are primes with similar length and satisfy the condition $q_i \equiv 1 \mod{2d}$ to enable NTT-based polynomial multiplication.




\begin{algorithm} 
    \renewcommand{\algorithmicrequire}{\textbf{Input:}}
    \renewcommand{\algorithmicensure}{\textbf{Output:}}
     \caption{Blinding key generation ($\Pi_{\text{SkBdKeyGen}}$)} \label{alg: sample}
    \begin{algorithmic}[1]
    \REQUIRE Ring $R_q = \mathbb{Z}_q[X]/X^d+1$; Hamming weight $h$ 
    \ENSURE Unblinding key $t$ with hamming weight $h$; Blinding key $t^{-1} \in R_q$
    \STATE Let $q = \prod_{i=0}^{L-1} q_i$ where $q_i$ is coprime to each other.
    \STATE $\mathbf{t} \leftarrow_U Z_d^h$
    \STATE $t \leftarrow 0 \in R_q$
    \STATE $t^{-1} \leftarrow 0 \in R_q$
    \FOR{$0\leq i <L$}
        \STATE $t^{(i)}\leftarrow 0 \in R_{q_i}$
        \STATE $p_i = q/q_i$
        \FOR{$ 0\leq j < h$} \label{step: resample}
            \STATE $t^{(i)}_j \leftarrow_U Z_{q_i}$
        \ENDFOR
        \STATE Check if $(t^{(i)})^{-1} \in R_{q_i}$ exists. If not, go to Step \ref{step: resample}. 
    \STATE $t \leftarrow t + t^{(i)}\cdot p_i \cdot (p_i^{-1} \mod{q_i}) \mod{q}$
    \STATE $t^{-1} \leftarrow t^{-1} + (t^{(i)})^{-1}\cdot p_i \cdot (p_i^{-1} \mod{q_i}) \mod{q}$
    \ENDFOR
    \RETURN $t,t^{-1} \in R_q$
    \end{algorithmic}
\end{algorithm}

\begin{theorem}\label{thm: CRT1}
    For any pair of $t,t^{-1} \in R_q$ sampled from Algorithm \ref{alg: sample} with sparsity $h$, $t\cdot t^{-1} \equiv 1 \mod{q}$
\end{theorem}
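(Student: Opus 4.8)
The plan is to exploit the CRT ring isomorphism $R_q \cong \prod_{i=0}^{L-1} R_{q_i}$ recalled just before Algorithm \ref{alg: sample} and to show that, under this isomorphism, $t$ and $t^{-1}$ are mapped to tuples that are componentwise inverses of one another. The product then maps to the all-ones tuple, whose unique preimage is the ring identity $1 \in R_q$, which is exactly the assertion $t \cdot t^{-1} \equiv 1 \pmod{q}$.

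First I would fix the coefficients $e_i := p_i \cdot (p_i^{-1} \bmod q_i) \bmod q$ appearing in the reconstruction steps of Algorithm \ref{alg: sample} and verify that they are the standard CRT idempotents. Since the $q_j$ are pairwise coprime, $p_i = q/q_i$ is divisible by every $q_j$ with $j \neq i$, so $e_i \equiv 0 \pmod{q_j}$ for $j \neq i$; and by construction $p_i \cdot (p_i^{-1} \bmod q_i) \equiv 1 \pmod{q_i}$, so $e_i \equiv 1 \pmod{q_i}$. Here I would also note that $\gcd(p_i, q_i) = 1$, which guarantees the existence of $p_i^{-1} \bmod q_i$, again follows from pairwise coprimality. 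Consequently, reducing the accumulated sums $t = \sum_i t^{(i)} e_i$ and $t^{-1} = \sum_i (t^{(i)})^{-1} e_i$ modulo each $q_i$ collapses every term but the $i$-th, giving $t \equiv t^{(i)} \pmod{q_i}$ and $t^{-1} \equiv (t^{(i)})^{-1} \pmod{q_i}$. In other words, the CRT images of $t$ and $t^{-1}$ are precisely the tuples $(t^{(i)})_i$ and $((t^{(i)})^{-1})_i$.

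Next I would invoke that the CRT map is a ring isomorphism, so multiplication in $R_q$ corresponds to componentwise multiplication in $\prod_i R_{q_i}$. The inverse $(t^{(i)})^{-1}$ is well-defined in $R_{q_i}$ precisely because the invertibility test in Algorithm \ref{alg: sample} rejects any candidate $t^{(i)}$ that is not a unit; this is the one place where the algorithm's resampling loop enters the argument. Therefore the $i$-th component of $t \cdot t^{-1}$ is $t^{(i)} \cdot (t^{(i)})^{-1} = 1$ in $R_{q_i}$ for every $i$, so $t \cdot t^{-1}$ maps to $(1, \dots, 1)$, which is the image of $1 \in R_q$. Applying the inverse isomorphism yields $t \cdot t^{-1} \equiv 1 \pmod{q}$, as claimed.

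I do not expect any serious obstacle: the statement is essentially a verification that the textbook CRT reconstruction formula produces components $t^{(i)}$ and that inversion is carried out componentwise. The only points requiring care are (i) confirming that $X^d+1$ serves as a common modulus across all the rings, so that the integer-level CRT lifts cleanly to the polynomial rings $R_{q_i}$, and (ii) making explicit that the invertibility guarantee for each $(t^{(i)})^{-1}$ comes from the algorithm's rejection step rather than from $R_{q_i}$ being a field, since $R_{q_i}$ need not be one.
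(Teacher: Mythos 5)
Your proof is correct and follows essentially the same route as the paper's: both reduce $t \cdot t^{-1}$ modulo each $q_i$, use $t^{(i)} \cdot (t^{(i)})^{-1} \equiv 1 \pmod{q_i}$ (guaranteed by the algorithm's invertibility check), and then lift back to $R_q$ via the coprimality of the $q_i$. Your write-up merely spells out the details the paper leaves implicit, namely the verification that the reconstruction coefficients $p_i \cdot (p_i^{-1} \bmod q_i)$ are CRT idempotents so that $t \equiv t^{(i)} \pmod{q_i}$.
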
 
\begin{proof}
    In Algorithm \ref{alg: sample}, we have that for $0\leq i <L$, $t^{(i)} \cdot (t^{(i)})^{-1} \equiv 1 \mod{q_i}$. According CRT, we have $t \cdot t^{-1} \equiv  t^{(i)}\cdot (t^{(i)})^{-1} \equiv 1 \mod{q_i}$. Then $t\cdot t^{-1}$ is equivalent to $1$ modulo the least common multiple of $q_0,\dots, q_L$, i.e. $q$. 
    
\end{proof}

\begin{theorem}\label{thm: CRT2}
    For any pair of $t,t^{-1} \in R_q$ sampled from Algorithm \ref{alg: sample} with sparsity $h$, $t$ is indistinguishable from any polynomial uniformly sampled from $R_q$ with sparsity $h$. 
\end{theorem}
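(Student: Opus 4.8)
The plan is to decompose the output distribution of Algorithm \ref{alg: sample} into the choice of \emph{support} (the $h$ nonzero positions) and the choice of \emph{coefficient values} on that support, and to argue that each agrees, up to negligible statistical distance, with the corresponding marginal of the uniform distribution over $S_h$. The support of $t$ is determined by the vector $\mathbf{t}\leftarrow_U Z_d^h$, which is drawn exactly as the support of a uniform element of $S_h$ would be; the only discrepancy is the possibility of repeated positions, which occurs with probability at most $h^2/d$ by a birthday bound and which I would either rule out by sampling distinct positions or absorb into the final estimate. Hence it suffices to fix a support $J$ with $|J| = h$ and compare the conditional value distributions.

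First I would use the CRT isomorphism $R_q \cong R_{q_0} \times \cdots \times R_{q_{L-1}}$ to describe the values. Restricting the isomorphism to polynomials supported on $J$ gives a bijection between such elements of $R_q$ and tuples $(t^{(0)}, \dots, t^{(L-1)})$ of $J$-supported elements of the $R_{q_i}$, under which $t$ is a unit if and only if every $t^{(i)}$ is. Algorithm \ref{alg: sample} draws each $t^{(i)}$ uniformly from the invertible, $J$-supported elements of $R_{q_i}$, independently across $i$; pushing this product distribution through the bijection shows that, conditioned on $J$, the output $t$ is \emph{exactly} uniform over the invertible polynomials supported on $J$. Theorem \ref{thm: CRT1} guarantees that the companion factor is its genuine inverse, so no further bias is introduced by the construction.

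The hard part will be comparing this uniform distribution over invertible $J$-supported polynomials with the target, namely the uniform distribution over $J$-supported polynomials whose $h$ coefficients are all nonzero (the restriction of $S_h$ to support $J$). These two distributions differ only through the invertibility constraint, so I would bound their statistical distance by controlling the symmetric difference of their supports inside the universe of all $q^h$ $J$-supported polynomials. Using $q_i \equiv 1 \pmod{2d}$, the modulus $X^d+1$ splits completely in $R_{q_i}$, so invertibility is equivalent to all $d$ NTT slots being nonzero; each slot is a nontrivial linear form in the $h$ coefficients and hence vanishes with probability $1/q_i$, giving, by a union bound over the $d$ slots and the $L$ components, a non-invertibility probability of at most $dL/\min_i q_i$. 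Since the probability that a uniform $J$-supported element fails to have full Hamming weight is at most $h/q$, both sets cover all but a $\mathrm{poly}(d,L)/\min_i q_i$ fraction of the universe, so their two uniform distributions are statistically close. Averaging over the uniform choice of $J$ and adding the collision term yields a total statistical distance of order $h^2/d + \mathrm{poly}(d,L)/\min_i q_i$, which is negligible for HE-scale moduli; this establishes the claimed statistical, and a fortiori computational, indistinguishability.
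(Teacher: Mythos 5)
Your proposal is correct and, at its core, rests on the same observation as the paper's own proof: the CRT isomorphism $R_q \cong R_{q_0}\times\cdots\times R_{q_{L-1}}$ turns independent uniform sampling of the per-modulus components into uniform sampling of the recombined polynomial. The paper stops essentially there --- it writes the output as $\left<\mathbf{a},\mathbf{p}\right>$ in the CRT idempotent basis and asserts that Algorithm \ref{alg: sample} samples $\mathbf{a}$ uniformly, hence the output is uniform over $S_h$. That assertion is not literally true of the algorithm as written, and your proposal repairs exactly the points the paper glosses over: (i) the rejection step means each $t^{(i)}$ is uniform over the \emph{invertible} $J$-supported elements of $R_{q_i}$, not over all of them, which you handle by bounding the density of non-units (at most $dL/\min_i q_i$, using the complete splitting of $X^d+1$ modulo each NTT-friendly $q_i$); (ii) the recombined polynomial can have a coefficient that is zero mod $q$, or the index vector $\mathbf{t}\in Z_d^h$ can contain repeats, either of which drops the Hamming weight below $h$ --- events the paper never mentions and that you bound by $h/q$ and $h^2/d$ respectively. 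So your route is a refinement rather than a departure: the paper proves exact uniformity of an idealized sampler, while you prove statistical closeness for the actual sampler, with explicit error terms; the former is shorter, the latter is what the theorem actually needs. One caveat: the collision term $h^2/d$ is not negligible in the cryptographic sense for the paper's concrete parameters (e.g., $h=17$, $d=2^{13}$ gives roughly $2^{-6}$), so to make the word ``indistinguishable'' literally true you should take the first of your two options --- sample the $h$ indices as a uniform $h$-subset of distinct positions --- rather than absorb that term into the final bound.
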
 
\begin{proof}
    Given that $q_i,i=0,\dots,L$ is coprime to each other, we can construct a vector $\mathbf{p} = [P_0,P_1,\dots,P_L]$, where $\{P_i = p_i\cdot (p_i^{-1}\mod{q_i})|i=0,\dots,L \}$. $\mathbf{p}$ forms a basis of $Z_q$. Then for any polynomial $a$ in $S_h$, it can be equivalently written by $\mathbf{a} = [a_0,a_1,\dots,a_L]$ where $\left< \mathbf{a}, \mathbf{p} \right> = a$. As Algorithm \ref{alg: sample} uniformly samples $\mathbf{a}$, it suffices to generate a random polynomial in $S_h$. 
\end{proof}

\subsection{Local Decryption}

$\Pi_{\text{LocDec}}$ employs the sparse-to-dense polynomial multiplication for the unblinding procedure, which leverages the sparsity of the unblinding factor to accelerate the multiplication (see Algorithm \ref{alg: LocDec}). The sparse-to-dense multiplication can be decomposed into $h\times L$ monomial multiplications since every component of the unblinding factor $t$ in different $R_{q_i}$ contains non-zero coefficients in the same positions. Monomial multiplications in $R_q$ can be easily conducted by coefficient shifting and scalar multiplication.

The performance of $\Pi_{\text{LocDec}}$ is further enhanced through several methods. Firstly, we delay the timing of the modulus operation when incorporating the result of monomial multiplications into $m$. This is due to our observation that when the loose upper bound of the infinite norm of $m^{(i)}$, i.e., $q_i \cdot h$, is significantly smaller than a word (e.g., 64 bits), we can safely postpone the modulus operation until the end of the aggregation of $m^{(i)}$ without the risk of overflow. Secondly, our implementation in the Lattigo library implies that minimizing assignment operations markedly improves performance.


\begin{algorithm} 
    \renewcommand{\algorithmicrequire}{\textbf{Input:}}
    \renewcommand{\algorithmicensure}{\textbf{Output:}}
    \caption{Secret Key Blinding ($\Pi_{\text{SkBd}}$)} \label{alg: seckeyob}
    \begin{algorithmic}[1]
    \REQUIRE Blinding key $s\in R_q$; Secret key $s \in R_q$
    \ENSURE Blinded secret key $\tilde{s} \leftarrow ts \in R_q$ 
    \end{algorithmic}
\end{algorithm}

\begin{algorithm} 
    \renewcommand{\algorithmicrequire}{\textbf{Input:}}
    \renewcommand{\algorithmicensure}{\textbf{Output:}}
    \caption{Blind Decryption ($\Pi_{\text{BldDec}}$)} \label{alg: pardec}
    \begin{algorithmic}[1]
    \REQUIRE Blinded secret key $\tilde{s}$; Ciphertext $c = (u,v) \in R_q^2$
    \ENSURE Blinded decryption result $\tilde{c} \leftarrow (u\tilde{s},v) \in R_q^2$
    \end{algorithmic}
\end{algorithm}

\begin{algorithm} 
    \renewcommand{\algorithmicrequire}{\textbf{Input:}}
    \renewcommand{\algorithmicensure}{\textbf{Output:}}
    \caption{Local Decryption ($\Pi_{\text{LocDec}}$)} \label{alg: LocDec}
    \begin{algorithmic}[1]
    \REQUIRE Unblinding key $t \in R_q$ with hamming weight $h$; Blinded decryption result $\tilde{c} = (\tilde{u},v)\in R_q$;
    \ENSURE Unblinded decryption result $m \in R_q$;
    \STATE Let $q = \prod_{i=0}^{L-1} q_i$ where $q_i$ is coprime to each other.
    \STATE $t^{(i)} \leftarrow t \mod{q_i}$
    \STATE Let vector $\mathbf{t}_{idx}$ represent the indices of $t$'s non-zero coeffecients.  
    \STATE For each $t^{(i)}$, let it be represneted a vector $\mathbf{t}^{(i)}$ where the $j$-th component corresponds to the non-zero coefficient at the $\mathbf{t}_{idx}[j]$-th term of $t^{(i)}$.
    \STATE $m\leftarrow 0 \in R_q$

    \FOR{$0\leq l < h$}
        \STATE $\text{monodegree} \leftarrow \mathbf{t}_{idx}[l]$
        \STATE $k_1\leftarrow d-\text{monodegree}$
        \STATE $k_2 \leftarrow -\text{monodegree}$
        \FOR {$q_i \in \{q_i|i=0,\dots,L\}$}
            \FOR{$0\leq j< \text{monodegree}$}
                \STATE $m[j] \leftarrow m[j] + (-\tilde{u}[k_1+j] \cdot \mathbf{t}^{(i)}[l]\mod{q_i})$
            \ENDFOR
            \FOR{$\text{monodegree} \leq j < d$}
                \STATE $m[j] \leftarrow m[j] + (\tilde{u}[k_2+j] \cdot \mathbf{t}^{(i)}[l]\mod{q_i})$
            \ENDFOR
        \ENDFOR
    \ENDFOR
    \FOR{$q_i \in \{q_i|i=0,\dots,L\}$}
        \FOR{$0 \leq j < d$ }
            \STATE $m[j]\leftarrow m[j] \mod{q_i}$
        \ENDFOR
    \ENDFOR
    \STATE $m\leftarrow m+v$
    \RETURN $m$
    \end{algorithmic}
\end{algorithm}

\subsection{Complexity Analysis}

$\mathcal{T}$ is responsible for routine $\Pi_{\text{SkBdKeyGen}},\Pi_{\text{SkBd}}$ and $\Pi_{\text{LocDec}}$ in the protocol, where the former two are only performed once following the generation of the secret key, and the later is done for each ciphertext blindly decrypted by $\Pi_{\text{BldDec}}$. Therefore, the complexity of $\mathcal{T}$ is dominated by the sparse-to-dense polynomial multiplication with roughly $hd$ (modular) scalar multiplications. 

On the other hand, the original decryption computed by $\mathcal{T}$ has a typical complexity of $ad\log{d} + d$ scalar multiplications corresponding to one Inversed NTT operation and coordinate-wise multiplication, where $a$ is a constant varies from concrete implementations of NTT. Then the efficiency of our proposed protocol is computed as $\frac{h}{a\log{d}+1}$.



\section{Security Analysis}\label{sec: secanl}

We initially present formal proof of the security of the proposed outsourced decryption in Theorem \ref{thm: OtsDecSec1}, reducing the protocol's security to the RLWE and NTRU assumptions. This stems from the fact that our blinded secret key can be viewed as a $(\gamma_s,\gamma_t,q)$-NTRU instance given in Definition \ref{def: NTRUinst}. Recovering the secret key from the blinded one corresponds to solving the $(\gamma_s,\gamma_t,q)$-search NTRU problem in Definition \ref{def: NTRUsearch}.

The major difference between our outsourcing decryption protocol and a standard NTRU public key encryption (PKE)\cite{hoffstein1998ntru,silverman2003Estimated,hoffstein2017choosing} is that an NTRU PKE typically requires $\gamma_s \approx \gamma_t \geq 1$ for correctness, whereas $\gamma_t$ in our protocol can be set as small as needed. Indeed, we want $t$ to span a sufficiently large space to ensure security and to be sparse enough to maintain efficiency. 

Recent research\cite{pellet2021hardness} indicates that classic hard lattice problems can be reduced to the NTRU-search problem, providing an upper bound on the difficulty of the NTRU-search problem. However, to parameterize the protocol's security, we have to analyze attacks on the secret key blinding technique, as the key distribution we employ differs from that targeted by previously known attacks \cite{coppersmith1997lattice,may1999cryptanalysis,silverman4dimension,albrecht2016subfield,cheon2016algorithm,duong2017choosing}. Below, we conduct several attacks on our secret key blinding technique, which we consider to be the most typical and threatening, and define the protocol's security based on the complexity of these attacks. A more comprehensive analysis of attacks on the protocol will be necessary in the future.



\begin{theorem}\label{thm: OtsDecSec1}
    The outsourced decryption is semantically secure, provided that the RLWE and NTRU-search assumptions hold where their instances are public together under the same secret.
\end{theorem}

\begin{proof}
    let $\mathcal{U}_{H}$ denotes adversary in the real world and $\mathcal{U}_{H}'$ the adversary in the ideal world. $\mathcal{U}_{H}$ takes in the public key $pk$, the blinded secret key $\tilde{s}$, a ciphertext $c$ and the auxiliary information represented by a polynomial $h$ as input, and guesses some knowledge of $c$ denoted as $f(c)$. $\mathcal{U}_{H}'$ also attempts to guess $f(c)$ but it has no access to $c$. To prove $\mathcal{U}_{H}'$ can correctly guess $f(c)$ with almost the same probability as $\mathcal{U}_{H}$, the simulation works as follows \cite{lindell2017simulate}: \\
    \begin{enumerate}[leftmargin=1em]
        \item $\mathcal{U}_{H}'$ runs:
            \begin{itemize}
                \item $pk,s \leftarrow \text{HE.KeyGen}(1^\lambda)$, 
                \item $(t,t^{-1})\leftarrow \Pi_{\text{SkBdKeyGen}}(1^\lambda)$,
                \item $\tilde{s}\leftarrow \Pi_{\text{SkBd}}(s,t^{-1})$.
            \end{itemize}
        \item $\mathcal{U}_{H}'$ computes $c \leftarrow \text{HE.Enc}(pk,m)$ where $m$ is sampled uniformly from the plaintext space. 
        \item $\mathcal{U}_{H}'$ runs $r\leftarrow \mathcal{U}_{H}(c,pk,\tilde{s})$ and outputs $r$ as its own output.\\
    \end{enumerate}
    The simulation converts the difference between the probability that $\mathcal{U}_{H}$ correctly outputs $f(c)$ in the real-world scenario and when invoked by $\mathcal{U}_{H}'$ to a distinguisher distinguishing a legal ciphertext and a "garbage" with known public key components $pk$ and $\tilde{s}$, where the ciphertext and $pk$ are RLWE instances and $\tilde{s}$ is an NTRU instance.
\end{proof}

\begin{definition}\label{def: NTRUinst}\cite{pellet2021hardness}
    A $(\gamma_s,\gamma_t,q)$-NTRU instance is $r \in R_q = Z_q[X]/(X^n+1)$ such that (i) $r = s\cdot t^{-1} \mod{q}$ and (ii) $||s||_2 \leq \frac{\sqrt{q}}{\gamma_s}, ||t||_2 \leq \frac{\sqrt{q}}{\gamma_t} $
\end{definition}
\begin{definition}\label{def: NTRUsearch}\cite{pellet2021hardness}
    The $(\gamma_s,\gamma_t,q)$-search NTRU problem asks, given a $(\gamma_s,\gamma_t,q)$-NTRU instance $r$, to recover $(s,t)\in R^2$ such that (i) $r = s\cdot t^{-1} \mod{q}$ and (ii) $||s||_2 \leq \frac{\sqrt{q}}{\gamma_s}, ||t||_2 \leq \frac{\sqrt{q}}{\gamma_t} $ 
\end{definition}

\subsection{Attacks on the secret key blinding technique}

Given a blinded private key $\tilde{s} = st^{-1}$, along with public parameters $\{d,q,h\}$ for the blinding algorithm, the adversary's goal is to recover either $s$ or $t$. $h$ is not necessarily a public parameter, but we assume it for the simplicity of security discussion. We also denote the set of all elements that $s$ can take as $S_{sk}$ and the security parameter as $\lambda$. Moreover, $\chi_{sk}$ is fixed as the ternary distribution over $R = \mathbb{Z}[X]/X^d+1$, which is a common choice for RLWE-based HE scheme \cite{bossuat2024security}. 



\subsubsection{\textbf{Brute force attacks.}} 
The adversary employing brute force enumeration first selects the smaller set between $S_{sk}$ and $S_{t}$ to iterate \cite{hoffstein1998ntru}. Then the adversary calculates whether $\tilde{s}^{-1}s'$ (or $\tilde{s}t'$) belongs to $S_h$ (or $S_{sk}$) for each $s'$ (or $t'$) encountered during the iteration. Thus, the security $\lambda$ can be represented by the following equation: $\lambda = \min{(\#S_h,\#S_{sk})}$. If the attacker has a sufficiently large space budget, then a meet-in-the-middle approach can be used as an optimization for enumeration. In particular, the adversary decomposes $t'^{-1}$ into $t'^{-1} = t_1+t_2$ for independent traversal, and iteratively computes $t_1\tilde{s}^{-1}$ and $-t_2\tilde{s}^{-1}$. Since $s$ is a polynomial with a small norm, a pair of $(t_1,t_2)$ may be a candidate solution when it satisfise that the results of $t_1\tilde{s}^{-1}$ and $-t_2\tilde{s}^{-1}$ are close. This method works the same for iterating $s'\in S_{sk}$ since $t$ is sparse. Security at this point should be rewritten as $\lambda =  \sqrt{\min{(\#S_h,\#S_{sk})}}$.


\subsubsection{\textbf{Lattice based attacks.}} 

\begin{table}
\begin{tabular*}{\columnwidth}{@{}lccccccccl@{}}\toprule
\multicolumn{3}{c}{Fixed $\log{q'}=27$}  & \phantom{abc}
& \multicolumn{3}{c}{Fixed $d=96$} \\
 \cmidrule{1-3} \cmidrule{5-7} 
$d$  & $||s||_2$ & Time(s) & & $\log{q'}$  & $c_h$ & Time(s)  \\ \midrule
$32$ & $4.62$ & $1.25$             & & $20$  & $0.02$ & $171.88$\\
$48$ & $5.66$ & $6.98$             & & $21$  & $0.02$ & $186.53$\\
$64$ & $6.53$ & $26.27$            & & $22$  & $0.03$ & $184.27$\\
$80$ & $7.30$ & $75.53$            & & $23$  & $0.04$ & $192.08$\\ 
$96$ & $8.00$ & $244.98$           & & $24$  & $0.06$ & $203.35$\\
$112$& $8.64$ & $529.23$           & & $25$  & $0.09$ & $218.02$\\
$128$& $9.24$ & $1824.78 (\times)$ & & $26$  & $0.12$ & $253.43$\\
\bottomrule
\\
\end{tabular*}
\caption{Lattice Reduction for $L(\tilde{s},\alpha)$ with $h = 5$ and $\log{q} = 33$}
\label{tb: LR_Q33R27H5}
\end{table}

\begin{table}
\begin{tabular*}{\columnwidth}{@{}lccccccccl@{}}\toprule
\multicolumn{3}{c}{Fixed $\log{q'}=27$}  & \phantom{abc}
& \multicolumn{3}{c}{Fixed $d=96$} \\
 \cmidrule{1-3} \cmidrule{5-7} 
$d$  & $||s||_2$ & Time(s) & & $\log{q'}$  & $c_h$ & Time(s)   \\ \midrule
$32$ & $4.62$ & $1.32$             & & $20$  & $0.02$ & $149.01$\\ 
$48$ & $5.66$ & $6.95$             & & $21$  & $0.03$ & $187.77$\\
$64$ & $6.53$ & $44.05$            & & $22$  & $0.04$ & $206.77$\\
$80$ & $7.30$ & $75.62$            & & $23$  & $0.05$ & $217.06$\\ 
$96$ & $8.00$ & $384.09(\times)$   & & $24$  & $0.07$ & $226.95$\\
$112$& $8.64$ & $991.29(\times)$   & & $25$  & $0.10$ & $232.05$\\
$128$& $9.24$ & $2074.43(\times)$  & & $26$  & $0.14$ & $276.07$\\
\bottomrule
\\
\end{tabular*}
\caption{Lattice Reduction for $L(\tilde{s},\alpha)$ with configured by $h = 10$ and $\log{q} = 33$}
\label{tb: LR_Q33R27H10}
\end{table}

Lattice reduction is a common technique for solving NTRU-search problems. Given the blinded secret key $\tilde{s} = st$, one can create a lattice containing some short vectors related to $s$ and $t$, which can be found by reduction algorithms such as BKZ\cite{schnorr1994lattice,chen2011bkz}. To construct such a lattice, we start by creating a $2d\times 2d$ matrix $L(\tilde{s}) = \begin{pmatrix} \mathbf{I} & \mathbf{\tilde{S}} \\ \mathbf{0} & q\mathbf{I} \end{pmatrix} $, where the 
$i$-th row of $\mathbf{\tilde{S}}$ are generated by the nega-cyclic multiplication between $\tilde{s}$ and $x^i$: 
\begin{equation}
\mathbf{\tilde{S}} = \begin{pmatrix}
\tilde{s}_0 & \tilde{s}_1 & \dots & \tilde{s}_{d-1} \\
\tilde{s}_{d-1} & \tilde{s}_0 & \dots & \tilde{s}_{d-2} \\
\vdots & \vdots & \ddots & \vdots \\
\tilde{s}_1 & \tilde{s}_2 & \dots & \tilde{s}_0 
\end{pmatrix}. 
\end{equation}
The vector $\tau = (t,s)$ is in the lattice generated by the rows of $L$ since we have $(t,s) = tL \mod{q}$. To make sure the lattice reduction algorithms have a higher probability of locating $\tau$, we modify $L(\tilde{s})$ into:
\begin{equation}
L(\tilde{s},\alpha) = \begin{pmatrix}
\alpha\mathbf{I} & \mathbf{\tilde{S}} \\ \mathbf{0} & q\mathbf{I} 
\end{pmatrix}. 
\end{equation}
Here, $\tau$ is redefined as $\tau = (\alpha t,s)$, and $\alpha=|s|_2/|t|_2$ so that the ratio of $|\tau|_2$ to the expected lower bound of the shortest vector in $L(\tilde{s},\alpha)$, denoted as $c$, is set to minimum: $c = \sqrt{\frac{2\pi e ||s||_2\cdot ||t||_2}{dq}}$. 

We perform lattice reduction on low dimension $L(\tilde{s},\alpha)$ using the well-known BKZ algorithm implemented in the Sagemath library\cite{sage} with fixed blocksize $\beta = 20$ (see Tables \ref{tb: LR_Q33R27H5} and \ref{tb: LR_Q33R27H10}). During the experiment, $t$ generated with $\Pi_{SkBdKeyGen}(R_{q},h)$ is bounded by an extra coefficient modulus $q'\leq q$ to see how $t$'s length affects the reduction performance. It can be observed that for fixed $d$ and $q$, increasing $t$'s length by raising $q'$ or $h$ makes it harder to find the target vector. This corresponds to a $c$ closer to $1$, which implies that $L(\tilde{s},\alpha)$ behaves more like a random lattice. On the other hand, $d$ also significantly impacts the reduction as the time seems to be exponentially high corresponding to $d$.


As $d$ inherited from the RLWE-based HE is usually set to a value no smaller than $2^{10}$, it is difficult to conduct lattice reduction experiments on such a high-dimensional setting in the real world. Thus, the best we can do is to estimate the complexity of reduction. To do so, we first compute the Hermite factor of $L(\alpha t,s)$: $\delta = ||\mathbf{b}_i||_2 / \text{vol}(L(\alpha t,s))$. BKZ with a larger block size is more likely to find the shortest vector in a lattice with a smaller Hermite factor, where the relationship between the Hermite factor and the block size can be described with the following equations\cite{chen2011bkz,dachman2020lwe}:
\begin{equation}\label{eq: delta-beta}
    \delta^{\text{dim}(L)} = \left( (\pi \beta)^{\beta/1} \cdot \frac{\beta}{2\pi e} \right)^{1/(2\beta-2)}
\end{equation}
In our case, if the BKZ finds a vector in $L(\alpha t,s)$ no longer than the expected shortest vector given by the Gaussian heuristic, then it reaches a Hermite factor of $\delta = \sqrt{\frac{d}{\pi e}}^{1/(2d)}$ \cite{gama2008predicting}. However, one can observe that $\delta$ is no larger than $0.0005$ when $d>2^{10}$, and the estimated blocksize for such $\delta$ given by Equation \ref{eq: delta-beta} is larger than the dimension of the lattice. This implies that BKZ may have difficulty finding a short vector in a reasonable time since the logarithm of the number of elementary operations in BKZ is roughly polynomial to the block size.

\subsubsection{\textbf{Zero-forced attacks.}} Although the general lattice-based attack seems unable to pose a significant threat to the protocol's security, we observe that it can be enhanced using the so-called zero-forced attack \cite{may1999cryptanalysis,silverman4dimension}, considering the high sparsity of $t$. The idea is to guess $r$ specific coefficient indices with zero values in $t$ ( or any of its nega-cyclic shifting variants) and to construct a lattice with reduced dimension $2(d-r)$. If the guess is correct, the dimension-reduced lattice will contain the expected short vector. In our scenario, this lattice can be constructed using the following steps:

\noindent\textbf{Step 1.}\quad Randomly select a set of specific coefficient indices $J$ as our guess:
\begin{equation}
    J = \{ j_1,j_2,\dots,j_r | 0\leq j_1 < \dots <j_r <N \} 
\end{equation}

\noindent\textbf{Step 2.}\quad Construct a variant of $\mathbf{\tilde{S}}$ with dimension $2\cdot (d-r)$, deonted as $\mathbf{\tilde{S}}^{ZF}$. The $i$-th row of this matrix is composed of the coefficients of $\tilde{s}$, which contribute to the coefficient $\tilde{s}_i$. There are exactly $d-r$ coefficients satisfying this condition since we assume that the coefficients of $t$ at positions $J$ are all zero. Formally, the $i$-th row of $\mathbf{\tilde{S}}^{ZF}$ can be expressed using the following equation:
\begin{equation}
    (\mathbf{\tilde{S}}^{ZF})_i = \begin{bmatrix}
       \cdots & (\tilde{s}\cdot x^{i})_{j} & \cdots  
    \end{bmatrix}, 0\leq j < d, j\not\in  J
\end{equation}
Note that we choose to guess the positions of zero coefficients in $t$ instead of in $s$ because in our scheme $t$ should be much sparser than $s$ which may be sampled from some discrete Gaussian distribution or ternary distribution.

\noindent\textbf{Step 3.}\quad Construct the dimension-reduced lattice with the following form:
\begin{equation}
    L(\alpha,\tilde{s})^{ZF} = \begin{bmatrix}
        \alpha I & \mathbf{\tilde{S}}^{ZF} \\
        \mathbf{0} & qI
    \end{bmatrix}
\end{equation}
Note that $L(\alpha,\tilde{s})^{ZF}$ is now a lattice with size $2\cdot(d-r) \times 2\cdot(d-r)$

The probability for us to guess a correct set of $J$ for $t$ or any of its shifted versions is computed with the following equation proposed by Silverman \cite{silverman4dimension}:
\begin{equation}\label{eq: prob}
    \text{Prob} \begin{pmatrix}
        \exists 0\leq k <N: \\ t_{j_1+k} = \dots = t_{j_r+k} = 0
    \end{pmatrix}  \approx 1-(1-\prod_{i=0}^{m-1}(1-\frac{r}{d-1}))^d . 
\end{equation}
Taking the result of Equation (\ref{eq: prob}) as $p$, the expected time for a zero-forced attack with specific $r$ to find $\tau$ or any of its shifting versions in some $L(\alpha,\tilde{s})^{ZF}$ is computed by $1/p\cdot \text{oprs}^{ZF}$, where $\text{oprs}^{ZF}$ denotes the number of operations required for reducing one $L(\alpha,\tilde{s})^{ZF}$ instance. Table \ref{fig: zeroforced_h} demonstrates the probability $p$ and the zero-forced attack complexity as $r$ increases with different choices of $h$. The complexity of the attack corresponds to the number of elementary operations in one BKZ iteration\cite{hoffstein2017choosing,chen2011bkz}: $\log{(\text{oprs})} = poly(\beta)+\log{dim}+7$, where $poly(\beta) = 0.00405892\beta^2 - 0.337913\beta+34.9018$.

\begin{figure}
\centerline{\includegraphics[scale=0.50]{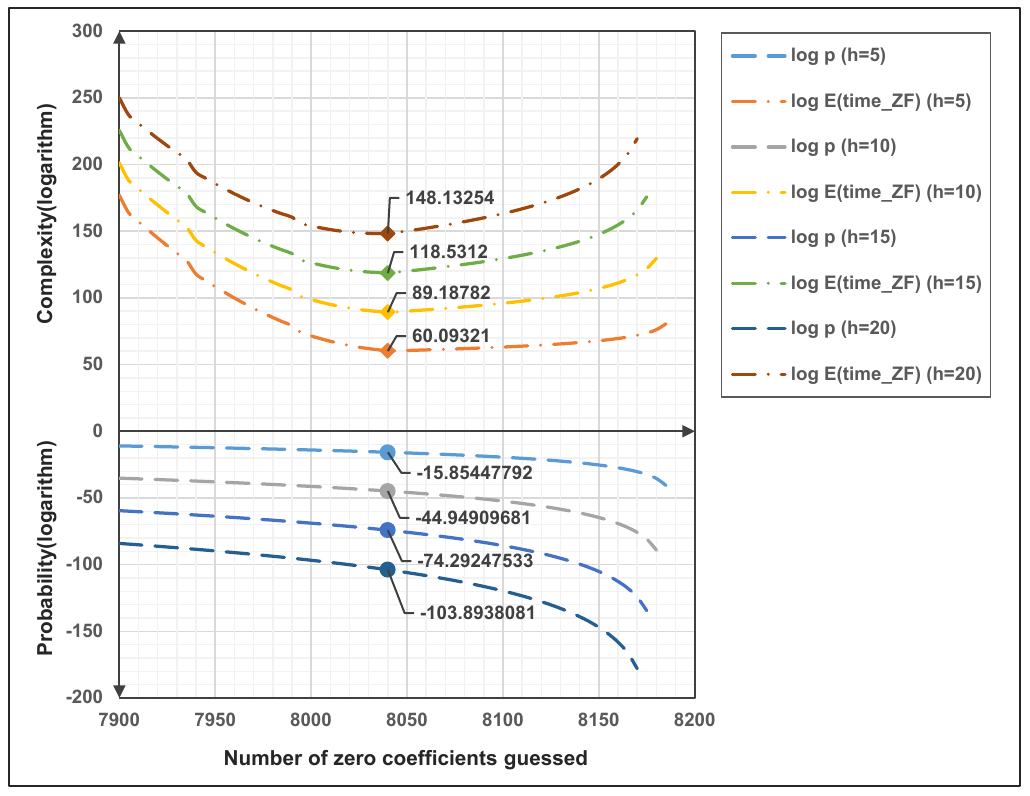}} 
\caption{Zero-forced attack complexity}
\label{fig: zeroforced_h}
\end{figure}

It can be observed that when the Hamming weight of $t$ is only 5, the complexity of the zero-forced attack drops to $2^{60}$ operations, posing a serious threat to the protocol's security. To ensure 128-bit security, a Hamming weight greater than 14 is required. However, higher density for $t$ implies increased complexity for the local decryption algorithm.

To counter this attack, we propose modifying the generation method of $t$: let $t$ be a composite polynomial formed by variables from the original sampling method. Specifically, for $h = h_1 + h_2$, we can set $\leftarrow \Pi_{\text{SkBdKeyGen}}(R_q, h_1) \cdot \Pi_{\text{SkBdKeyGen}}(R_{q_2}, h_2) \mod{q}$. This way, the Hamming weight of $t$ is at least $h_1 \cdot h_2 - \min(h_1, h_2)$, and this lower bound is only met when the non-zero positions of the two composite parts align exactly in arithmetic sequences with the same step size. One can adjust the number of components composing $t$ and each of their hamming weight to achieve the trade-off between complexity and density. Overall, the key idea is to enhance the density of $t$ without intensively increasing the complexity of polynomial multiplication. Note that the increased non-zero positions of the modified $t$ are not independent. Nonetheless, we believe this modification is sufficient to withstand zero-forced attacks.



\subsection{Security Parameterization}


Based on the attacks discussed above, we can parameterize the security of the outsourced decryption. We consider three security parameter cases: $\lambda = 128, 192, 256$. For each $\lambda$, we discuss the required values of $h$ and $q$ when $d = 13, 14, 15$, and $16$. During the parameterization, $h$ is selected such that a zero-forced attack requires a complexity no smaller than $2^{\lambda}$ elementary operations. For a given $h$, we fix the sampling method of $t$ as $t \leftarrow \Pi_{\text{SkBdKeyGen}}(R_q, h_1) \cdot (\Pi_{\text{SkBdKeyGen}}(R_{q_2}, h_2)) \mod{q}$, where $h_1\cdot h_2 - \text{min}(h_1,h_2) \geq h$. The sampling space determined by $q$ and $h$ should be large enough to guard against brute-force attacks. For adversaries enumerating the $t$ using the sampling method above, $q$ and $h$ should satisfy the following equation:
\begin{equation}
 \sqrt{C_d^{h_1} (q-1)^{h_1} \cdot C_d^{h_2} (q_2-1)^{h_2}} \geq 2^{\lambda}.  
\end{equation}
By further fixing $h_1= 6$ and $q_2 = 2$, we generate the sample parameter sets satisfying three security categories respectively, as shown in Table \ref{tb:params}.

\begin{table}
\begin{tabular*}{\columnwidth}{@{}ccccccccccccc@{}} \toprule
& & \multicolumn{2}{c}{$\lambda=128$}  & \phantom{abc} & \multicolumn{2}{c}{$\lambda=192$}  & \phantom{abc} & \multicolumn{2}{c}{$\lambda=256$} \\
\cmidrule{3-4} \cmidrule{6-7} \cmidrule{9-10}
$d$ & & $h$ & $\log{q}$ & & $h$ & $\log{q}$ & & $h$ & $\log{q}$ \\ \midrule
$2^{13}$ & & $17$ & $23$ & & $28$ & $19$ & & $39$ & $16$   \\
$2^{14}$ & & $15$ & $22$ & & $25$ & $19$ & & $34$ & $15$  \\
$2^{15}$ & & $13$ & $22$ & & $22$ & $18$ & & $30$ & $15$  \\
$2^{16}$ & & $12$ & $21$ & & $19$ & $18$  & & $26$ & $13$  \\
\bottomrule
\end{tabular*}    
\\
\caption{parameterization for the outsourced decryption protocol}\label{tb:params}
\end{table}

\section{Implementation}\label{sec:exp}

After discussing the security of the outsourced decryption protocol, we move on to the implementation and performance evaluation. 

We implement all the proposed algorithms with the Lattigo library V4.1.0 \cite{lattigo}, where we use the CKKS scheme it supports as the underlying RLWE-based HE scheme. Our experiments are conducted on a machine equipped with an i7-13700K (3.40 GHz) processor and 64GB of memory.

To evaluate the protocol's performance on practical RLWE-based HE parameter sets, we compare the local decryption time of the original decryption and our outsourced method under parameter sets with $d={13,14,15,16}$ (see Figure \ref{fig: ckks1_3} and \ref{fig: ckks1_4} for comparison results). One may refer to the FHE parameterization guidance proposed by Bossuat et al.\cite{bossuat2024security} for the corresponding security level under the RLWE assumption. Note that for any fixed \( d \), the security of the RLWE assumption increases as the modulus \( q \) decreases. The minimum required \( q \) in Table \ref{tb:params} is sufficient to exceed 256-bit security under the RLWE assumption. Thus, in terms of parameter selection, the outsourced decryption and the underlying HE scheme exhibit no explicit conflict.



\begin{figure}
\centerline{\includegraphics[scale=0.47]{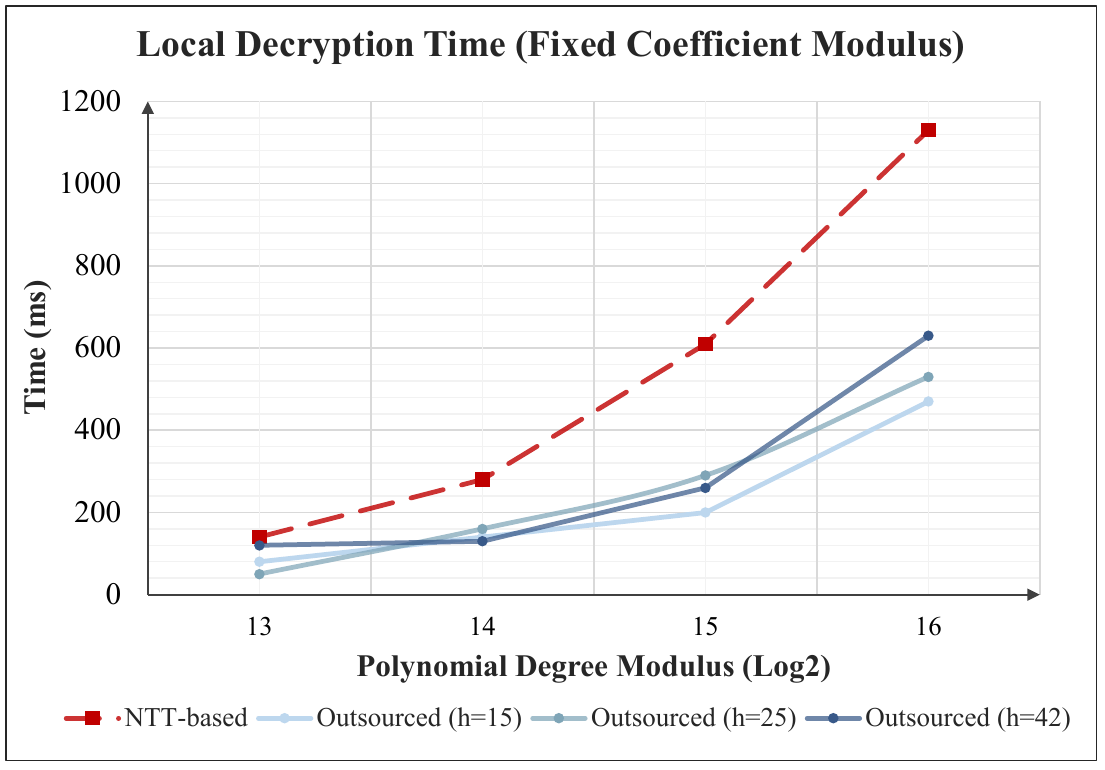}} 
\caption{Local decryption time with $\log{q}\leq 64$}
\label{fig: ckks1_3}
\end{figure}

\begin{figure}
\centerline{\includegraphics[scale=0.47]{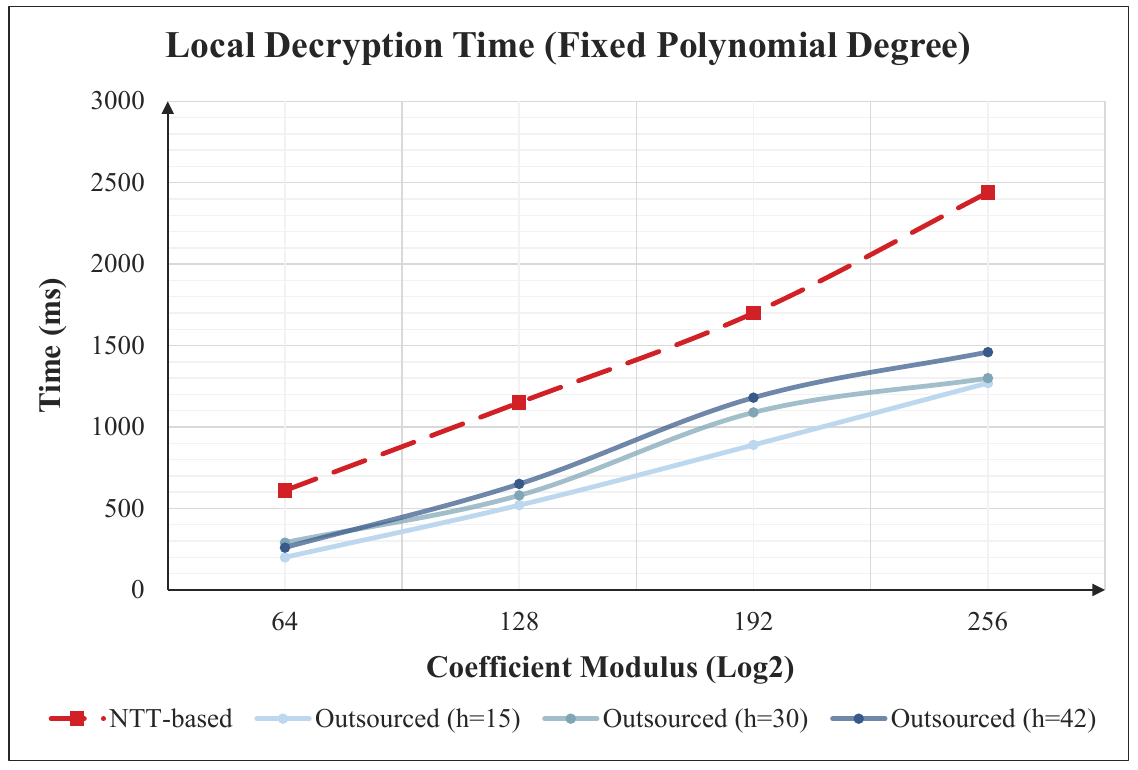}} 
\caption{Local decryption time with fixed $d=2^{15}$}
\label{fig: ckks1_4}
\end{figure}

\subsection{Efficiency}



Figure \ref{fig: ckks1_3} illustrates the performance of local decryption across various polynomial degree moduli, where we test the cumulative time for executing local decryption 1000 times. For different $d$, we select coefficient moduli $q$ differently to satisfy the NTT requirement $q \mod {2d} \equiv 1$. Since Lattigo uses at least one 64-bit integer to store a polynomial coefficient modulo $q$, we control $\log{q}<64$ to eliminate the complexity difference brought by $q$'s length. In fact, for $q$ satisfying the residue number system: $q = \prod_{i=0}^{L-1}q_i$, Lattigo employs 64-bit integers to represent each $q_i$.

It can be observed that the differences between the curves for various $h$ are minimal, as in the sampling method of $t$, with $h_1$ fixed at $6$, $h_2$ only needs to be within the range $[3,8]$ to meet all three security categories. On the other hand, the time taken by the original decryption algorithm based on NTT increases with $d$, while the gap between this and our protocol's local decryption time also grows, corresponding to the factor $1/\log{d}$ in the efficiency ratio. Overall, we achieve a speed-up of $43\%\sim 67\%$ under $\lambda=128$, and $14\% \sim 67\%$ under $\lambda=256$.

We further conducted 1000 iterations of local decryption with a fixed $d$ and varying coefficient moduli (see Figure \ref{fig: ckks1_4}). We can observe that the efficiency ratio between our protocol and the original decryption algorithm remains fairly consistent across different coefficient moduli lengths, aligning with our theoretical predictions. Note that homomorphic evaluation is typically performed on larger $q$, but as long as $q$ satisfies the residue number system, decryption can be performed in a low-modulus ring while maintaining a considerable level of precision.

\subsection{Space Reduction}

We observe that the proposed protocol helps save the memory needed for decryption. For the original decryption, its minimum space consumption commonly consists of three components: secret key $s \in R_q$, ciphertext $(u,v)\in R_q^2$, and parameters for performing $INTT$. $(u,v)$ takes up $2 \ell d$ bits, where $\ell$ is the number of bits for storing one coefficient modulo $q$. Although $s$ is usually sampled from some distribution with very small deviation, it is the point-value form $\hat{s} \leftarrow INTT(s)$ instead of $s$ itself that should be stored. $\hat{s}$ requires $\ell d$ bits since the components of $\hat{s}$ is not necessarily small. The parameters for $INTT$ are actually powers of some root of unity in $R_q$ and also need $\ell d$ bits of memory. Such space composition can be relatively large due to large $d$ and $q$. Table \ref{fig: ckks1_1} demonstrates the space consumption for original decryption implemented by the Lattigo library with $d=2^{15}$.

On the other hand, $\Pi_{\text{LocDec}}$ requires no invocation of $\hat{s}$ and NTT-related parameters, but only the sparse unblinding factor $t$ and the blindly decrypted ciphertext $(\tilde{u},v)$. This saves up to nearly $50\%$ of space consumption since $t$ requires only $2 \ell (h_1+h_2)$ bits for storing its non-zero coefficients and the corresponding indices, where $h < h_1*h_2 - \text{min}(h_1,h_2)$ and $h$ is much smaller than $d$. Table \ref{fig: ckks1_4} demonstrates the space consumption for the local decryption with $d=2^{15}$ and various $q$. 


\begin{figure}
\centerline{\includegraphics[scale=0.60]{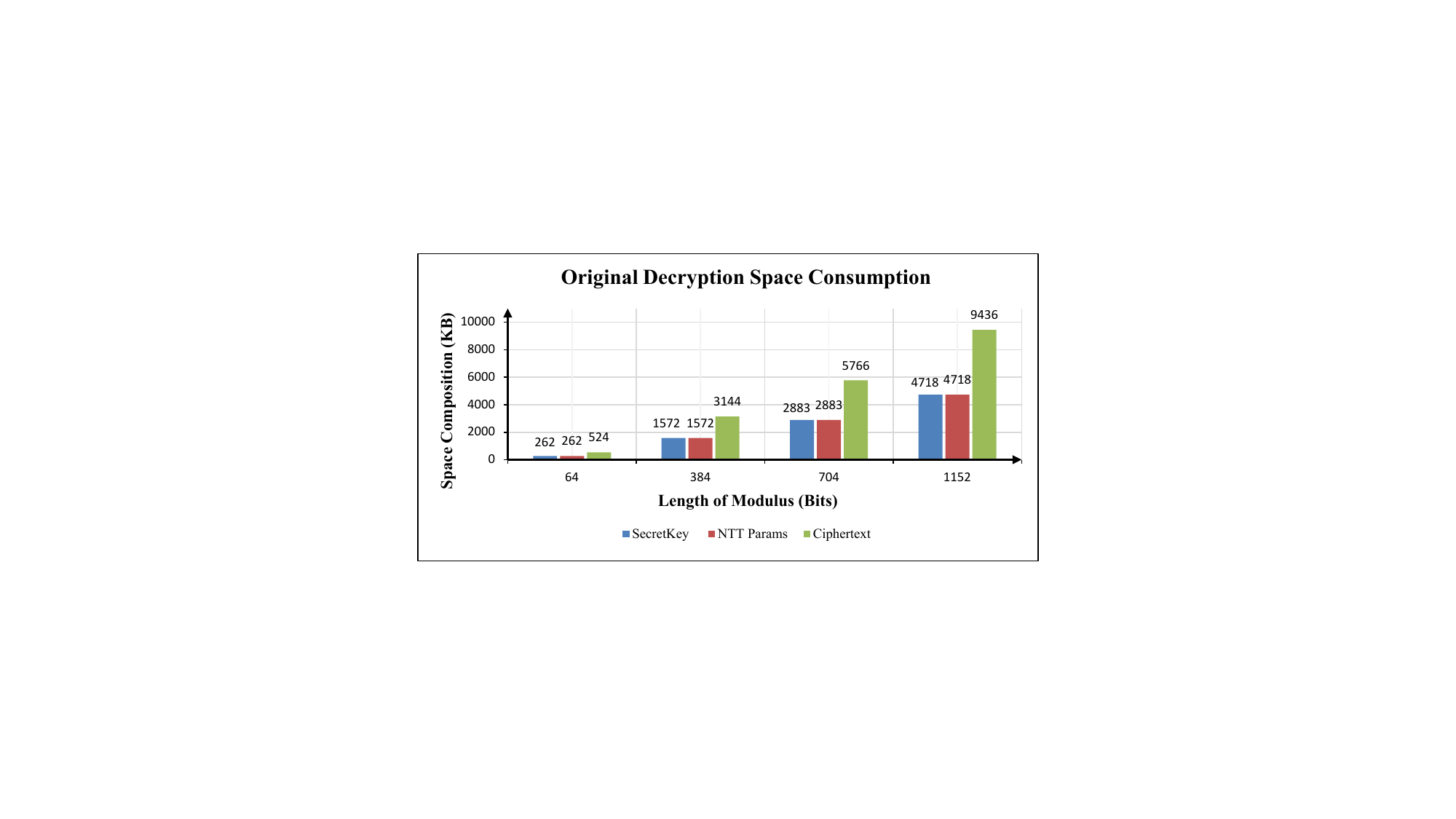}} 
\caption{Original decryption space consumption with $d=2^{15}$}
\label{fig: ckks1_1}
\end{figure}

\begin{figure}
\centerline{\includegraphics[scale=0.60]{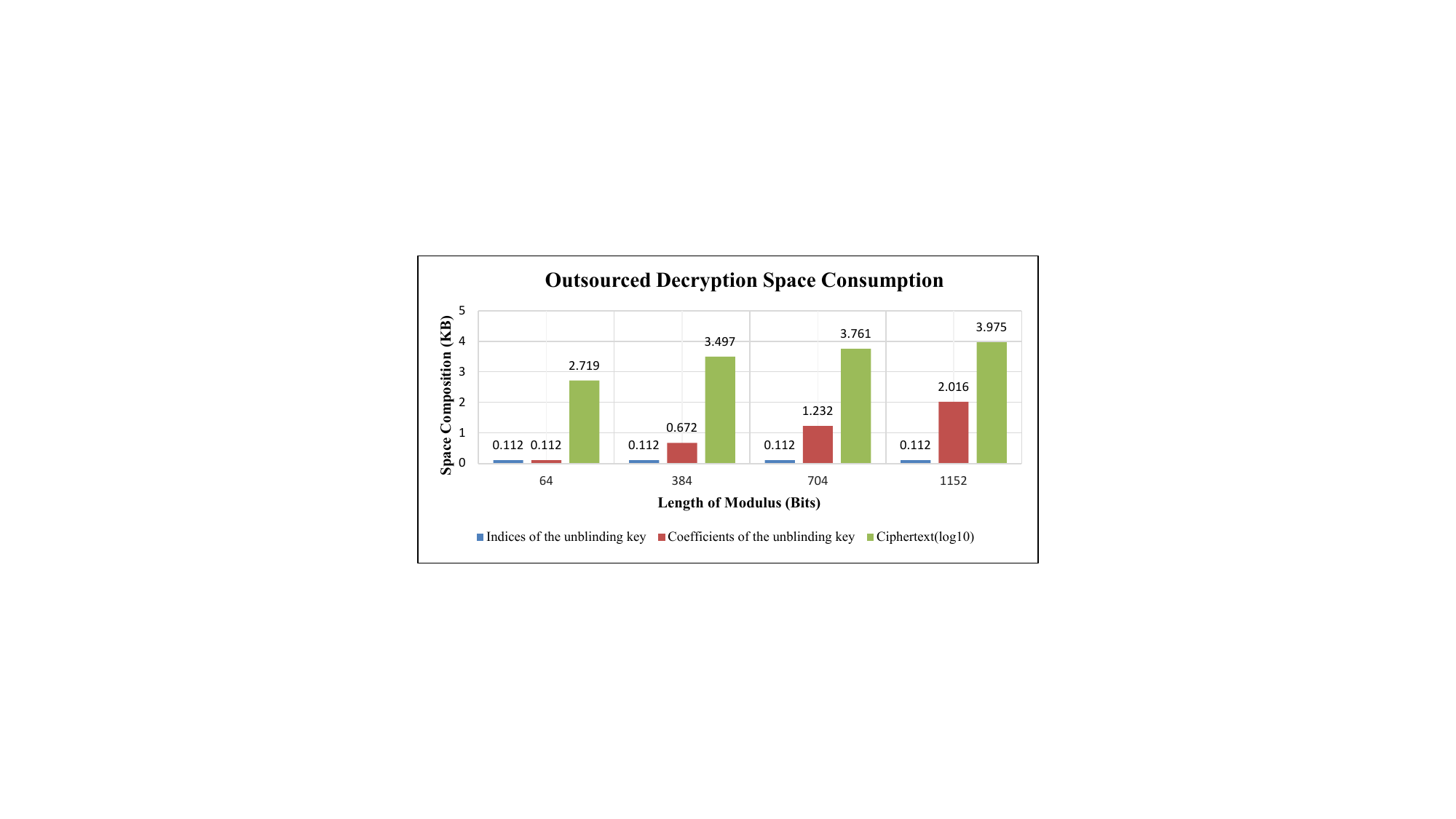}} 
\caption{Outsourced decryption space consumption with $d=2^{15}$}
\label{fig: ckks1_2}
\end{figure}

\section{Conclusion \& Further Discussion}\label{sec: conclusion}
In this paper, we have presented an outsourced decryption protocol for RLWE-based homomorphic encryption schemes to alleviate clients' local decryption burden in a cloud computing setting. We have demonstrated that the protocol preserves certain security under typical attacks and exhibits a significant speed-up and space reduction on client-side decryption. Nevertheless, there are several topics for further discussion. 

The security of this protocol could be further investigated. For example, subring and subfield attacks \cite{albrecht2016subfield,cheon2016algorithm,duong2017choosing,kirchner2017revisiting} compromise NTRU in polynomial time when the coefficient modulus is exponentially high relative to the polynomial degree, but with the secret vector occupying a much smaller space. Thus it is uncertain how they may affect the security of the proposed secret key blinding technique. Moreover, the proposed protocol is secure only if the RLWE and NTRU assumptions hold when both instances are published under the same secret. Therefore exploring the actual effect of such behavior is also required in future work.

Moreover, a verification method can be provided for the protocol to adapt it to more variable scenarios. One possible probabilistic verification method is to use polynomial evaluation. Client can prepare some random points, and verify correctness by comparing the evaluation results returned by the cloud with those of the original decryption algorithm at these points. Further discussion is needed to determine the feasibility of this approach.



\section*{Acknowledgment}
This work is supported by the Key Research and Development Program of Shandong Province, China (Grant No. 2022CXGC020102), the Science and Technology Small and Medium Enterprises (SMEs) Innovation Capacity Improvement Project of Shandong Province \& Jinan City, China (Grant No. 2022TSGC2048), Haiyou Famous Experts - Industry Leading Talent Innovation Team Project, and the National Natural Science Foundation of China under Grant(NO.62072247).

\bibliographystyle{IEEEtran}
\bibliography{IEEEabrv,mylib}

\clearpage

\end{document}